\newcommand{\eps}{\epsilon}
\title{Optimal Tracking of Distributed Heavy Hitters and Quantiles}
\author{Ke Yi \and Qin Zhang}
\date{Department of Computer Science and Engineering \\
Hong Kong University of Science and Technology \\
$\{$yike, qinzhang$\}$cse.ust.hk}
\begin{document}

\maketitle

\begin{abstract}
  We consider the the problem of tracking heavy hitters and quantiles in
  the distributed streaming model.  The heavy hitters and quantiles are two
  important statistics for characterizing a data distribution.  Let $A$ be
  a multiset of elements, drawn from the universe $U=\{1,\dots,u\}$.  For a
  given $0 \le \phi \le 1$, the $\phi$-heavy hitters are those elements of
  $A$ whose frequency in $A$ is at least $\phi |A|$; the $\phi$-quantile of
  $A$ is an element $x$ of $U$ such that at most $\phi|A|$ elements of $A$
  are smaller than $A$ and at most $(1-\phi)|A|$ elements of $A$ are
  greater than $x$.  Suppose the elements of $A$ are received at $k$ remote
  {\em sites} over time, and each of the sites has a two-way communication
  channel to a designated {\em coordinator}, whose goal is to track the set
  of $\phi$-heavy hitters and the $\phi$-quantile of $A$ approximately at
  all times with minimum communication.  We give tracking algorithms with
  worst-case communication cost $O(k/\eps \cdot \log n)$ for both problems,
  where $n$ is the total number of items in $A$, and $\eps$ is the
  approximation error.  This substantially improves upon the previous known
  algorithms.  We also give matching lower bounds on the communication
  costs for both problems, showing that our algorithms are optimal.  We
  also consider a more general version of the problem where we
  simultaneously track the $\phi$-quantiles for all $0 \le \phi \le 1$.
\end{abstract}


\section{Introduction}
Data streams have been studied in both the database and theory communities
for more than a decade \cite{alon99,babcock02:_model_}.  In this model,
data items arrive in an online fashion, and the goal is to maintain some
function $f$ over all the items that have already arrived using small
space.  A lot of $f$'s have been considered under the streaming model.  The
theory community have studied various frequency moments
\cite{alon99,indyk05:_optim_,woodruff04:_optim_}, geometric problems
\cite{Indyk:04,suri04:_range,agarwal07:_space_}, and some graph problems
\cite{feigenbaum05:_graph_,bar-yossef02:_reduc}.  While the database
community have mostly focused on maintaining the frequent items (a.k.a.\
{\em heavy hitters})
\cite{cormode08:_findin,cormode03,karp03,metwally06,manku02:_approx} and
quantiles \cite{gilbert02:_how,cormode06:_space,greenwald01:_space}, two
very important statistics for characterizing a data distribution.  Since we
cannot afford to store all the items, we can only maintain an approximate
$f$ (except for some trivial $f$'s), and all the results in the streaming
model are expressed as a tradeoff between the approximation error $\eps$
and the space used by the algorithm.  After a long and somehow disorganized
line of research, the heavy hitter problem is now completely understood
with both space upper and lower bounds determined at $\Theta(1/\eps)$;
please see the recent paper by Cormode and Hadjieleftheriou
\cite{cormode08:_findin} for a comprehensive comparison of the existing
algorithms for this problem, both theoretically and empirically.  For
maintaining quantiles, the best upper bound is due to a sketch structure by
Greenwald and Khanna \cite{greenwald01:_space}, using space $O(1/\eps\cdot
\log(\eps n))$ where $n$ is the number of items in the stream.  This is
conjectured to be optimal but not yet proved.

Recent years have witnessed an increasing popularity of another model more
general than the streaming model, where multiple streams are considered.
In this model, multiple streams are received at multiple distributed {\em
  sites}, and again we would like to continuously track some function $f$
over the union of all the items that have arrived across all the sites.
Here the most important measure of complexity is the total communication
cost incurred during the entire tracking period.  This model, which is
either referred to as the {\em distributed streaming model} or the {\em
  continuous communication model}, is a natural combination of the
classical communication model \cite{yao79} and the data stream model.
Recall that the communication model studies the problem of computing some
function $f$ over distributed data using minimum communication.  The
data is predetermined and stored at a number of sites, which communicate
with a central coordinator, and the goal is to do a {\em one-time}
computation of the function $f$.  Thus the distributed streaming model is
more general as we need to maintain $f$ continuously over time as items
arrive in a distributed fashion.

The rising interest on the distributed streaming model is mainly due to its
many applications in distributed databases, wireless sensor networks, and
network monitoring.  As a result, it has attracted a lot of attention
lately in the database community, resulting in a flurry of research in this
area
\cite{Cormode:Muthukrishnan:Zhuang:07,cormode06:what,keralapura06,Cormode:Garofalakis:Muthukrishnan:Rastogi:05,cormode05:sketch,Olston:Jiang:Widom:03,Babcock:Olston:03,fuller07:_fids,deshpande04:_model,manjhi05:_findin,olston05:_effic,sharfman08:_shape}.
However, nearly all works in this area are heuristic and empirical
in nature, with a few exceptions to be mentioned shortly.  For many
fundamental problems in this model, our theoretical understandings are
still premature.  This is to be contrasted with the standard streaming
model, where theory and practice nicely blend, and in fact many of the most
practically efficient solutions are the direct products of our theoretical
findings.  In this paper, we take an important step towards an analytical
study of the distributed streaming model, by considering the worst-case
communication complexity of tracking heavy hitters and quantiles, arguably
two of the most fundamental problems on data streams.

\paragraph{The distributed streaming model.}
We now formally define the distributed streaming model, which is the same
as in most works in this area.  Let $A=(a_1, \dots, a_n)$ be a sequence of
items, where each item is drawn from the universe $U=\{1,\dots,u\}$.  The
sequence $A$ is observed in order by $k\ge 2$ remote {\em sites} $S_1,
\dots, S_k$ collectively, i.e., item $a_i$ is observed by exactly one of
the sites at time instance $t_i$, where $t_1 < t_2 < \cdots < t_n$.  Let
$A(t)$ be the multiset of items that have arrived up until time $t$ from
all sites.  Then the general goal is to continuously track $f(A(t))$ for
some function $f$ at all times $t$ with minimum total communication among
the sites.  Note that in the classical communication model, the goal is to
just compute $f(A(+\infty))$; in the data stream model, the goal is to
track $f(A(t))$ for all $t$ but there is only one site ($k=1$), and we are
interested in the space complexity of the tracking algorithm, not
communication.  Thus, the distributed streaming model is a natural
combination of the two, but is also significantly different from either.

We define the manner of communication more precisely as follows.  There is
a distinguished {\em coordinator} $C$, who will maintain (an approximate)
$f(t)$ at all times.  There is a two-way communication channel between the
coordinator and each of the $k$ sites, but there is no direct communication
between any two sites (but up to a factor of 2, this is not a restriction).
Suppose site $S_j$ receives the item $a_i$ at time $t_i$.  Based on its
local status, $S_j$ may choose to send a message to $C$, which in turn may
trigger iterative communication with other sites.  We assume that
communication is instant.  When all communication finishes, all the sites
who have been involved may have new statuses, getting ready for the next
item $a_{i+1}$ to arrive.  We will measure the communication cost in terms
of words, and assume that each word consists of $\Theta(\log u) =
\Theta(\log n)$ bits.  Finally we assume that $n$ is sufficiently large
(compared with $k$ and $1/\eps$); if $n$ is too small, a naive solution
that transmits every arrival to the coordinator would be the best.

In this paper we will focus on the communication cost (or simply the {\em
  cost}).  Nevertheless, all the algorithms proposed in this paper can be
implemented both space- and time-efficiently.

\paragraph{Heavy hitters and quantiles.}
By taking different $f$'s, we arrive at different continuous tracking
problems.  The notion of $\eps$-approximation also differs for different
functions.  We adopt the following agreed definitions in the literature.
In the sequel, we abbreviate $A(t)$ as $A$ when there is no confusion.

For any $x\in U$, let $m_x(A)$ be the number of occurrences of $x$ in $A$.
For some user specified $0 \le \phi\le 1$, the set of {\em $\phi$-heavy
  hitters} of $A$ is $\mathcal{H}_\phi(A) = \{ x \mid m_x(A) \ge \phi
|A|\}$, where $|A|$ denotes the total number of items in $A$.  If an
$\eps$-approximation is allowed, then the returned set of heavy hitters
must contain $\mathcal{H}_\phi(A)$ and cannot include any $x$ such that
$m_x(A)<(\phi-\eps)|A|$.  If $(\phi - \eps)|A| \le m_x(A) < \phi|A|$, then
$x$ may or may not be reported.  In the {\em heavy hitter tracking}
problem, the coordinator should always maintain an approximate
$\mathcal{H}_\phi(A)$ at all times for a given $\phi$.

For any $0 \le \phi \le 1$, the {\em $\phi$-quantile} of $A$ is some $x \in
U$ such that at most $\phi |A|$ items of $A$ are smaller than $x$ and at
most $(1-\phi)|A|$ items of $A$ are greater than $x$.  The quantiles are
also called {\em order statistics} in the statistics literature.  In
particular, the $\frac{1}{2}$-quantile is also known as the {\em median} of
$A$.  If an $\eps$-approximation is allowed, we can return any
$\phi'$-quantile of $A$ such that $\phi - \eps \le \phi' \le \phi + \eps$.
In the {\em $\phi$-quantile tracking} problem, the coordinator needs to
keep an $\eps$-approximate $\phi$-quantile of $A$ at all times for a given
$\phi$.  We also consider a more general version of the problem, where we
would like to keep track of all the quantiles approximately.  More
precisely, here the ``function'' $f$ is a data structure from which an
$\eps$-approximate $\phi$-quantile for any $\phi$ can be extracted.  Note
that such a structure is equivalent to an (approximate) equal-height
histogram, which characterizes the entire distribution.

In particular, from an {\em all-quantile} structure, we can easily obtain
the $(2\eps)$-approximate $\phi$-heavy hitters for any $\phi$, as observed
in \cite{Cormode:Garofalakis:Muthukrishnan:Rastogi:05}.  Therefore, the
all-quantile tracking problem is more general than either the
$\phi$-heavy hitter tracking problem or the $\phi$-quantile tracking
problem.  In the rest of the paper, we omit the word ``approximate'' when
referring to heavy hitters and quantiles when the context is clear.

\paragraph{Previous works.}
Traditionally, query answering in distributed databases follows a ``poll''
based approach, that is, the coordinator collects information from the
sites to answer a query posed by the user using minimum communication.
Such a paradigm falls into the realm of the classical multi-party
communication theory.  These queries are also referred to as {\em one-shot}
queries in the literature.  As long-standing queries that need to be
answered continuously become common in many modern applications such as
sensor network monitoring, network anomaly detection, publish-subscribe
systems, etc., periodically polling all the sites is neither efficient nor
effective (i.e., long latency).  Thus, the trend is moving towards a
``push'' based approach \cite{Jain:Hellerstein:Ratnasamy:Wetherall:04}, in
which the sites actively participate in the tracking process.  In this
framework, each site maintains some local conditions, and will not initiate
communication unless one of the conditions is triggered.  Such an approach
often leads to much reduced communication overhead compared with the
``poll'' based approach, since the system will react only when
``interesting'' things are happening.  This is the main motivation that has
led to the distributed streaming model described above.

Various $f$'s have been considered under this framework.  The simplest case
$f(A) = |A|$ just counts the total number of items received so far across
all the sites.  This problem can be easily solved with $O(k/\eps \cdot \log
n)$ communication where each site simply reports to the coordinator
whenever its local count increases by a $1+\eps$ factor
\cite{keralapura06}.  The other important single-valued statistics are the
frequency moments: $F_p(A) = \sum_x (m_x(A))^p$.  $F_0$ is the number of
distinct items, and can be tracked with cost $O(k/\eps^2 \cdot \log n
\log\frac{n}{\delta})$ \cite{graham08}; $F_2$ is the self-join size and can
be tracked with cost $O((k^2/\eps^2 + k^{3/2}/\eps^4) \log n
\log\frac{kn}{\eps\delta})$ \cite{graham08}.  Some heuristic approaches
based on predicting future arrivals of items have been proposed in
\cite{cormode06:what,cormode05:sketch}.

Single-valued statistics have very limited expressive power, so
multi-valued statistics are often necessary to better capture the
distribution of data.  The most important ones include the heavy hitters
and quantiles, and they have also been studied under the distributed
streaming framework.  Babcock and Olston \cite{Babcock:Olston:03} designed
some heuristics for the top-$k$ monitoring problem, where the goal is to
track the $k$ most frequent items (whose frequency may not be larger than
$\phi |A|$).  Their techniques can be adapted to tracking the heavy hitters
\cite{fuller07:_fids}, but the approach remains heuristic in nature.
Manjhi et al.~\cite{manjhi05:_findin} also studied the heavy hitter
tracking problem, but their communication model and the goal are different:
They organize the sites in a tree structure and the goal is to minimize the
communication only at the root node.  The all-quantile tracking problem has
been studied by Cormode et
al.~\cite{Cormode:Garofalakis:Muthukrishnan:Rastogi:05}, who gave an
algorithm with cost $O(k/\eps^2 \cdot \log n)$.  As commented earlier, this
also implies a heavy hitter tracking algorithm with the same cost.  This
remains the best communication upper bound for both problems to date.  No
lower bound is known.

\paragraph{Our results.}
Our main results in this paper are the matching upper and lower bounds on
the communication cost for deterministic algorithms for both the heavy
hitter tracking problem and the quantile tracking problem.  Specifically,
we show that for any $\phi$, both the $\phi$-heavy hitters
(Section~\ref{sec:track-heavy-hitt}) and the $\phi$-quantile
(Section~\ref{sec:track-median}) can be tracked with total communication
cost $O(k/\eps \cdot \log n)$.  This improves upon the previous result of
\cite{Cormode:Garofalakis:Muthukrishnan:Rastogi:05} by a $\Theta(1/\eps)$
factor.  We also give matching lower bounds for both problems, showing that
our tracking protocols are optimal in terms of communication.  Note that in
the classical communication model, we can easily do a one-shot computation
of the $\phi$-heavy hitters and the $\phi$-quantile easily with cost
$O(k/\eps)$, as observed in
\cite{Cormode:Garofalakis:Muthukrishnan:Rastogi:05}.  Interestingly, our
results show that requiring the heavy hitters and quantiles to be tracked
at all times indeed increases the communication complexity, but only by a
$\Theta(\log n)$ factor.  In Section~\ref{sec:tracking-quantiles}, we give
an algorithm that tracks all quantiles with cost $O(k/\eps \cdot
\log^2\frac{1}{\eps}\log n)$.  Because this problem is more difficult than
the single-quantile problem, it has the same lower bound of $\Omega(k/\eps
\cdot \log n)$ as the latter.  Thus, our all-quantile tracking algorithm is
also optimal up to a $\Theta(\polylog\frac{1}{\eps})$ factor.

\section{Tracking the Heavy Hitters}
\label{sec:track-heavy-hitt}


\subsection{The upper bound}
\label{sec:upper-bound}
\paragraph{The algorithm.}
Let $m$ be the current size of $A$. First, the coordinator $C$
always maintains $C.m$, an $\eps$-approximation of $m$.  This can
be achieved by letting each site send its local count every time
it has increased by a certain amount (to be specified shortly).
Each site $S_j$ maintains the exact frequency of each $x \in U$ at
site $S_j$, denoted $m_{x,j}$, at all times. The overall frequency
of $x$ is $m_x = \sum_j m_{x,j}$. Of course, we cannot afford to
keep track of $m_x$ exactly.  Instead, the coordinator $C$ maintains
an underestimate $C.m_{x,j}$ of $m_{x,j}$, and sets $C.m_x =
\sum_j C.m_{x,j}$ as an estimate of $m_x$.  $S_j$ will send its
local increment of $m_{x,j}$ to $C$, hence updating $C.m_{x,j}$,
from time to time following certain rules to be specified shortly.
In addition, each site $S_j$ maintains ${S_j}.m$, an estimate of
$m$, a counter ${S_j}.\Delta(m)$, denoting the increment of
$S_j.m$ since its last communication to $C$ about $S_j.m$, as well
as a counter ${S_j}.\Delta(m_x)$ for each $x$, denoting the
increment of ${S_j}.m_x$ since its last communication to $C$ about
$m_{x,j}$.

We can assume that the system starts with $m = k/\eps$ items;
before that we could simply send each item to the coordinator.  So
when the algorithm initiates, all the estimates are exact. We
initialize ${S_j}.\Delta(m)$ and ${S_j}.\Delta(m_x)$ for all $x$
to be 0. The protocols of tracking the $\phi$-heavy hitters are as
follows.
\begin{enumerate}
\item {\em Each site $S_j$:} When a new item of $x$ arrives,
${S_j}.\Delta(m)$ and ${S_j}.\Delta(m_x)$ are incremented by 1.
When ${S_j}.\Delta(m)$ (resp.\ ${S_j}.\Delta(m_x)$) reaches $(\eps
\cdot {S_j}.m)/3k$, site $S_j$ sends a message $(all, (\eps \cdot
{S_j}.m)/3k)$ (resp.\ $(x,(\eps \cdot {S_j}.m)/3k)$) to the
coordinator, and resets ${S_j}.\Delta(m)$ (resp.\
${S_j}.\Delta(m_x)$) to 0.

\item {\em Coordinator:} When $C$ has received a message $(all,
(\eps \cdot {S_j}.m)/3k)$ or $(x,(\eps \cdot {S_j}.m)/3k)$, it
updates $C.m$ to $C.m + (\eps \cdot {S_j}.m)/3k$ or $C.m_x$ to
$C.m_x + (\eps \cdot {S_j}.m)/3k$, respectively. Once $C$ has
received $k$ signals in the forms of $(all, (\eps \cdot
{S_j}.m)/3k)$, it collects the local counts from each site to compute
the exact value of $m$, sets $C.m = m$, and then broadcasts $C.m$
to all sites.  Then each site $S_j$ updates its ${S_j}.m$ to $m$.
After getting a new $S_j.m$, $S_j$ also resets $S_j.\Delta(m)$ to
0.
\end{enumerate}

Finally, at any time, the coordinator $C$ declares an item $x$ to be a
$\phi$-heavy hitter if and only if
\begin{equation}
\label{eq:hh}
\frac{C.m_x}{C.m} \ge \phi + \frac{\eps}{2}.
\end{equation}

\paragraph{Correctness.}
To prove correctness we first establish the following invariants
maintained by the algorithm.
\begin{equation}
\label{eq:bound_C.m_x} m_x - \frac{\eps m}{3} + k \le C.{m_x} \le
m_x,
\end{equation}
\begin{equation}
\label{eq:bound_C.m} m - \frac{\eps m}{3} + k \le C.m \le m.
\end{equation}

The second inequalities of both (\ref{eq:bound_C.m}) and
(\ref{eq:bound_C.m_x}) are obvious. The first inequality of
(\ref{eq:bound_C.m_x}) is valid since once a site $S_j$ gets
$(\eps \cdot {S_j}.m)/3k$ items of $x$, it sends a message to the
coordinator and the coordinator updates $C.m_x$ accordingly. Thus
the maximum error of $C.m$ in the coordinator is at most
$\sum_{j=1}^k (\frac{\eps \cdot {S_j}.m}{3k}- 1) \le \frac{\eps
m}{3} - k$. The first inequality of (\ref{eq:bound_C.m}) follows
from a similar reason. Combining (\ref{eq:bound_C.m_x}) and
(\ref{eq:bound_C.m}), we have
$$\frac{m_x}{m} - \frac{\eps}{3} < \frac{C.m_x}{C.m} <
\frac{m_x}{m}\cdot\frac{1}{1-\eps/3} < \frac{m_x}{m} + \frac{\eps}{2},$$
which guarantees that the approximate ratio $\frac{C.m_x}{C.m}$ is within
$\eps/2$ of $\frac{m_x}{m}$, thus classifying an item using (\ref{eq:hh})
will not generate any false positives or false negatives.

\paragraph{Analysis of communication complexity.}
We divide the whole tracking period into rounds. A round start
from the time when the coordinator finishes a broadcast of $C.m$ to the
time when it initiates the next broadcast. Since the coordinator
initiates a broadcast after $C.m$ is increased by a factor of
$1+\sum_{i=1}^{k}(\eps/3k) = 1+\eps/3$, the number of rounds is
bounded by
$$\log_{1+\eps/3} n = O\left(\frac{\log n}{\eps}\right).$$

In each round, the number of messages in the form of $(all, (\eps \cdot
{S_j}.m)/3k)$ sent by all the sites is $k$ by the definition of our
protocol. Since there are $O(\log n/ \eps)$ rounds in total, the number of
messages in the form of $(all, (\eps \cdot {S_j}.m)/3k)$ can be bounded by
$O(k/\eps \cdot \log n)$. On the other hand, it is easy to see that total
number of messages of the form $(x, (\eps \cdot S_j.m)/3k)$ is no more than
the total number of messages of the form $(all, (\eps \cdot S_j.m)/3k)$.
Therefore, the total cost of the whole system is bounded by
$O(k/\eps \cdot \log n)$.

\begin{theorem}
For any $\eps \le \phi \le 1$, there is a deterministic algorithm that
  continuously tracks the $\phi$-heavy hitters and incurs a total
  communication cost of 
  $O(k/\eps\cdot \log n)$.
\end{theorem}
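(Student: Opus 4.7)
The plan is to design a push-based protocol in which each site only communicates when some local quantity crosses a threshold, and then show that with a suitable choice of threshold the cumulative error in the coordinator's estimates stays within a factor that makes a simple threshold rule correct. The key design decision is to make the per-site threshold scale with a current estimate of $m$, so that the slacks form a geometric progression as the stream grows.

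Concretely, I would have each site $S_j$ track exact local counts $m_{x,j}$ and the total $\sum_x m_{x,j}$, and have the coordinator maintain underestimates $C.m_x = \sum_j C.m_{x,j}$ and $C.m$. I would set the per-site push threshold to roughly $\eps\, S_j.m/(3k)$, so that at most $k$ sites together can accumulate at most $\eps m/3$ undelivered increments; this is the slack needed so that $C.m/m$ and $C.m_x/m_x$ both lie in the window $[1-\eps/3,1]$. From the two invariants
\begin{equation*}
m - \tfrac{\eps m}{3} + k \le C.m \le m,\qquad m_x - \tfrac{\eps m}{3} + k \le C.m_x \le m_x,
\end{equation*}
I would derive that $|C.m_x/C.m - m_x/m| < \eps/2$, so declaring $x$ heavy when $C.m_x/C.m \ge \phi+\eps/2$ avoids both false positives and false negatives.

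For communication, I would divide time into \emph{rounds} demarcated by a coordinator broadcast that resets every site's local view $S_j.m$ to the true $m$. A broadcast is triggered only after $C.m$ has grown by a $(1+\eps/3)$ factor, so the number of rounds is at most $\log_{1+\eps/3} n = O((\log n)/\eps)$. Within a round, the per-site threshold is fixed at $\eps\, S_j.m/(3k)$, so the number of ``total count'' messages sent is $O(k)$; the number of per-element messages is bounded by the same, because each such message is charged against at least one unit of the total increment in its site. Summing over rounds yields the claimed $O(k/\eps \cdot \log n)$ bound.

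The part that requires the most care is the coupling between the round structure and the local thresholds: sites set their thresholds using an estimate $S_j.m$ that can drift away from the true $m$ between broadcasts, and one needs to check that this drift never makes the cumulative slack across sites exceed $\eps m/3$. The clean way to handle this is to have the broadcast be triggered exactly when $C.m$ (not $m$) has grown by $(1+\eps/3)$, so the per-site thresholds in use during the round are always consistent with the accounting that certifies the invariants above; once this consistency is pinned down, both correctness and the geometric round count follow, and the theorem is just the conjunction of these pieces.
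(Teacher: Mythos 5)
Your proposal follows the paper's proof essentially verbatim: same per-site thresholds $\eps\, S_j.m/(3k)$, same invariants on $C.m$ and $C.m_x$, same round structure with a broadcast triggered once $C.m$ grows by a $(1+\eps/3)$ factor, and the same observation that per-element messages are dominated by the total-count messages. One small remark: the decision rule $C.m_x/C.m \ge \phi + \eps/2$ (which you inherited from the paper's equation for declaring heavy hitters) has the sign on $\eps/2$ wrong --- the invariants yield $m_x/m - \eps/3 < C.m_x/C.m < m_x/m + \eps/2$, so to guarantee every true $\phi$-heavy hitter is reported the threshold must be on the order of $\phi - \eps/3$, not $\phi + \eps/2$.
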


\paragraph{Implementing with small space.}
In the algorithm described above, we have assumed that each site maintains
all of its local frequencies $S_j.m_x$ exactly.  In fact, it is not
difficult to see that our algorithm still works if we replace these exact
frequencies with a heavy hitter sketch, such as the {\em space-saving}
sketch \cite{metwally06}, that maintains the local $\eps'$-approximate
frequencies for all items for some $\eps' = \Theta(\eps)$.  More precisely,
such a sketch gives us an approximate $S_j.m_x$ for any $x\in U$ with
absolute error at most $\eps' |S_j|$, where $|S_j|$ denotes the current
number of items received at $S_j$ so far.  We need to adjust some of the
constants above, but this does not affect our asymptotic results.  By using
such a sketch at each site, our tracking algorithm can be implemented in
$O(1/\eps)$ space per site and amortized $O(1)$ time per item.

\subsection{The lower bound} \label{sec:lower-bound-heavy-hitt}

To give a lower bound on the total communication cost that any
deterministic tracking algorithm must take, we first consider the number of
changes that the set of heavy hitters could experience, where a {\em
  change} is defined to be the transition of the frequency of an item from
above $\phi|A|$ to below $(\phi-\eps)|A|$, or the other way round.  Then we
show that to correctly detect each change, the system must exchange at
least a certain amount of messages. The following lemma could be
established by construction.

\begin{lemma}
\label{lem:numberOfOutputUpdates} For any $\phi>3\eps$, there is a
sequence of item arrivals such that the set of heavy hitters in
the whole tracking period will have $\Omega(\log n / \epsilon)$
changes.
\end{lemma}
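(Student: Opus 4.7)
I will exhibit an explicit arrival sequence that forces $\Omega(\log n/\eps)$ changes. Let $K=\max\{1,\lfloor 1/(2\phi)\rfloor\}$, so $K$ is a positive integer with $K\phi\le 1$. The plan is to maintain $K$ designated items $x_1,\dots,x_K$ at identical counts and repeatedly drive all of them across the heavy-hitter threshold in lockstep, obtaining $2K$ changes per cycle while $|A|$ grows by only a constant factor per cycle.

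Initialize $|A|=m_0$ (with $m_0=\Theta(1/\eps)$ and $m_0\ll n$) by inserting $\lceil\phi m_0\rceil$ copies of each $x_i$ and letting the remaining items be fresh distinct fillers; this is feasible because $K\phi\le 1$. Then repeat the following two-phase cycle. In the \emph{down phase}, feed fresh filler items until $|A|$ grows to $\phi|A|/(\phi-\eps)$, so that every $x_i$'s frequency ratio drops from $\ge\phi$ to $<\phi-\eps$ simultaneously, contributing $K$ heavy-to-nonheavy transitions. In the \emph{up phase}, feed equally many copies of each $x_i$, just enough to restore every ratio to $\phi$, contributing $K$ nonheavy-to-heavy transitions.

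A direct algebraic computation, using $K\phi\le 1/2$, shows that each full cycle multiplies $|A|$ by the factor $\rho=(\phi+\eps)/(\phi-\eps)=1+2\eps/(\phi-\eps)$. Hence the number of cycles that fit before $|A|$ exceeds $n$ is
\[
T \;=\; \Omega\!\left(\frac{\log(n/m_0)}{\log\rho}\right) \;=\; \Omega\!\left(\frac{(\phi-\eps)\log n}{\eps}\right),
\]
so the total number of changes is
\[
2KT \;=\; \Omega\!\left(K(\phi-\eps)\,\frac{\log n}{\eps}\right) \;=\; \Omega\!\left((1-\eps/\phi)\,\frac{\log n}{\eps}\right) \;=\; \Omega\!\left(\frac{\log n}{\eps}\right),
\]
where the last step uses $\phi>3\eps$ (so $\eps/\phi<1/3$) to keep $1-\eps/\phi$ bounded away from $0$. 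The boundary regime $\phi>1/2$, in which $K=1$ is forced, is handled by the analogous closed-form cycle factor and yields the same asymptotic bound.

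The main point to nail down will be the simultaneity of the $K$ transitions in each phase. This is straightforward because by construction all $x_i$'s retain identical counts throughout (equal numbers added in up phases, none added in down phases), so they cross the thresholds $\phi$ and $\phi-\eps$ together; integer rounding introduces only an additive-constant slack that is absorbed in $\Omega$.
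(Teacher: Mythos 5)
Your sawtooth construction captures the right quantitative trade-off (about $1/\eps$ designated items cycling across the heavy-hitter boundary while $|A|$ grows by a $1+\Theta(\eps)$ factor per cycle), and it is genuinely different from the paper's construction. The paper uses two groups that \emph{relay}: the copies that push group $\mathcal{S}_{1-b}$ up to ratio $\phi$ simultaneously serve as the dilution that pushes group $\mathcal{S}_b$ down to ratio $\phi - 2\eps$, so no filler items are needed and the two groups together account for all of $A$. You instead use a single group and explicit fillers, with a down phase followed by an up phase. Both are legitimate routes; the relay is tighter (no mass wasted on fillers), while the sawtooth is arguably simpler to state. Note also that $\rho=(\phi+\eps)/(\phi-\eps)$ is only an \emph{upper} bound on the per-cycle growth when $K\phi<1/2$ (the exact factor is $(r-K\phi)/(1-K\phi)$ with $r=\phi/(\phi-\eps)$, which is increasing in $K\phi$), but that is the right direction for lower-bounding $T$, so the accounting survives.

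However, as written the down phase does not actually force any change, and this is a genuine gap. You grow $|A|$ only to $\phi m/(\phi-\eps)$, which leaves each $x_i$ at ratio exactly $\phi-\eps$; that is the \emph{boundary} of the tolerance band $[(\phi-\eps)|A|,\,\phi|A|)$ inside which a correct algorithm may report $x_i$ or not, at its option. An algorithm that never stops reporting $x_1,\dots,x_K$ as heavy is therefore a correct $\eps$-approximation on your input and undergoes zero changes, so the lemma does not follow. (The paper's definition of a change requires the frequency to cross \emph{below} $(\phi-\eps)|A|$, not merely reach it.) This is exactly why the paper introduces the slack $\eps'=2\eps$ and drives the falling group down to ratio $\phi-2\eps$, strictly below the boundary, forcing the algorithm to stop reporting before the next up-transition. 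The fix for your construction is the same: in the down phase, grow $|A|$ to $\phi m/(\phi-2\eps)$ so each $x_i$ lands at ratio $\phi-2\eps$, and then add enough copies of each $x_i$ to return to ratio $\phi$. This only changes the cycle factor by a constant and the rest of your argument goes through. Finally, your hand-wave about the regime $\phi>1/2$ is not fully justified --- with $K=1$ the per-cycle growth is $1+\Theta\bigl(\eps/(1-\phi)\bigr)$, so the change count degrades to $\Theta\bigl((1-\phi)\log n/\eps\bigr)$ as $\phi\to 1$ --- but the paper's construction has the same implicit restriction (it needs $l=1/(2\phi-2\eps)\ge 1$), so this is a shared limitation rather than a flaw peculiar to your proof.
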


\begin{proof}
  Set $\eps'=2\eps$. We construct two groups of $l= 1/(2\phi - \epsilon')$
  items each: $\mathcal{S}_0=\{t_1, t_2, \ldots, t_l\}$ and
  $\mathcal{S}_1=\{t_{l+1}, t_{l+2}, \ldots, t_{2l}\}$. Since we only care
  about the total number of changes of the set of heavy hitters during the
  whole tracking period, we temporarily treat the whole system as one big
  site and items come one by one. We will construct an input sequence under
  which the set of heavy hitters will undergo $\Omega(\log n /
  \epsilon)$ changes.

We still divide the whole tracking period to several rounds, and
let $m_i$ denote the total number of items when round $i$ starts.
The following invariant will be maintained throughout the
construction:
\begin{enumerate}
\item[] Let $b = i \bmod 2$.  When round $i$ starts, all items $t
\in \mathcal{S}_{b}$ have frequency $\phi m_i$, and all items $t
\in \mathcal{S}_{1-b}$ have frequency $(\phi-\epsilon') m_i$.
\end{enumerate}
It can be verified that the total frequency of all items is indeed
$m_i$. Note that from the start of round $i$ to the end of round
$i$, all the non-heavy hitters become heavy hitters, and all the
heavy hitters become non-heavy hitters. In what follows we only
care about the changes of the former type, which lower bounds
the number of changes. To maintain the
invariant for round $i+1$, we construct item arrivals as follows.
Without loss of generality, suppose $\mathcal{S}_{1-b} = \{t_1,
t_2, \ldots, t_l\}$. Let $\beta = \frac{\epsilon'(2\phi -
\epsilon')}{\phi - \epsilon'}$. We first generate
 $\beta m_i$ copies of $t_1$, and then $\beta m_i$
copies of $t_2$, \dots, then $\beta m_i$ copies of $t_l$, in sequence.
After these items we end round $i$ and start round
$i+1$. At this turning point, the total number of items is
$$m_{i+1} = m_i + l\cdot \beta m_i = \frac{\phi}{\phi-\epsilon'} m_i.$$
Now the frequency of each item in the set $\mathcal{S}_{1-b}$ is
$$(\phi - \epsilon')m_i + \beta m_i = \phi \cdot \frac{\phi}{\phi -
  \epsilon'} m_i = \phi m_{i+1},$$
and the frequency of each item in $\mathcal{S}_{b}$ remains the
same, that is, $\phi m_i = (\phi - \epsilon') m_{i+1}$.  Now we have
  restored the invariant and can start round $i+1$.

Finally, we bound the number of rounds.  Since the total number of
items $m_i$ increases by a $\phi/(\phi-\eps')$ factor in each
round, the total number of rounds is
$\Theta(\log_{\frac{\phi}{\phi-\epsilon'}}n)$. Consequently, the
total number of changes in the set of heavy hitters (from
non-heavy hitters to heavy hitters) is $l \cdot
\Theta(\log_{\frac{\phi}{\phi-\epsilon'}}n) =
\Omega(\frac{1}{\phi-\eps} \cdot \frac{\phi-\eps'}{\eps'} \log n)
= \Omega(\log n/ \epsilon)$.
\end{proof}

Now we go back to the distributed scenario and consider the cost of
communication for ``recognizing'' each change. Because we allow some
approximation when classifying heavy hitters and non-heavy hitters, the
valid time to report a change is actually a time interval, from
the time when its frequency just passes $(\phi-\eps)|A|$ to the time when
its frequency reaches $\phi|A|$.  As long as the tracking algorithm signals
the change within this interval, the algorithm is considered to be correct.
Consider the construction in the proof of
Lemma~\ref{lem:numberOfOutputUpdates}.  In round $i$, the transition
interval from a non-heavy hitter to a heavy hitter for an item $t$ must lie
inside the period in which the $\beta m_i$ copies of $t$ arrive.  Below we
will show that in order for the coordinator to signal the change within
this period, $\Omega(k)$ messages have to be exchanged in the worst case
using an adversary argument.

Before presenting the lower bound proof, let us be more precise about
the computation model.  Recall that in the introduction, the model
forbids a site to spontaneously initiate communication or change its
local status; actions can only be triggered as a result of the
arrival of an item at this site, or in response to the coordinator.  Note
that for deterministic algorithms this is not a restrictive
assumption. 
In our case, since we only care about the frequency of a particular
item $t$ increasing from $m_i$ to $m_i+\beta m_i$, we may assume that
each site $S_j$ has a {\em triggering threshold} $n_j$, meaning that
$S_j$ will only initiate communication when the number of copies of
$t$ received by $S_j$ is $n_j$.  When all the communication triggered
by the arrival of an item finishes, all the sites that have
participated are allowed to update their triggering thresholds, but
the rest of the sites must retain their old thresholds.  

\begin{lemma}
\label{lem:costEachUpdate} To correctly recognize a change in
the heavy hitters under the input constructed in the proof of
lemma~\ref{lem:numberOfOutputUpdates}, any deterministic algorithm has to
incur a communication cost of $\Omega(k)$.
\end{lemma}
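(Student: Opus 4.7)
The plan is to prove the lemma by a direct adversary argument against any deterministic protocol. Consider the reporting window for the transition of item $t$ in round $i$, during which exactly $\beta m_i$ copies of $t$ arrive. Invoking the computation model just defined, at the instant the window opens each site $S_j$ has a deterministic triggering threshold $n_j$ for copies of $t$: the number of additional copies $S_j$ must receive before spontaneously initiating communication. These thresholds remain frozen unless $S_j$ participates in communication.

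The first step I would carry out is to bound $\sum_j n_j$ from above. Suppose $\sum_j n_j \ge \beta m_i + k$. Then the adversary can route the $\beta m_i$ copies of $t$ so that every site $S_j$ receives strictly fewer than $n_j$ copies, hence no site triggers and no message is sent. But then the coordinator's state (and hence its declared heavy-hitter set) is frozen throughout the window, contradicting the fact that the correct answer must switch from ``$t$ not heavy'' to ``$t$ heavy'' somewhere inside the window. Therefore $\sum_j n_j < \beta m_i + k$.

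The second step is the greedy-triggering adversary. Sort the sites so that $n_1 \le n_2 \le \cdots \le n_k$, and have the adversary feed copies of $t$ to $S_1$ until it triggers, then to $S_2$ until it triggers, and so on. As long as the coordinator does not intervene between triggers, each $S_j$ still triggers after receiving exactly $n_j$ copies, so after processing all $\beta m_i$ copies the adversary has forced the first $\ell$ sites to trigger, where $\ell$ is the largest index with $\sum_{j=1}^\ell n_j \le \beta m_i$. Since $\sum_{j=1}^k n_j < \beta m_i + k$, a simple averaging argument gives $\ell = \Omega(k)$.

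The delicate part, and the main obstacle I anticipate, is coordinator adaptivity: when $S_1$ triggers, the coordinator may respond by sending messages to other sites that reset their thresholds and thus defuse them. I would dispose of this by accounting: every such message already counts toward the communication cost, so if the coordinator updates $\Omega(k)$ sites in response to triggers during the window, the bound is achieved directly; if it updates only $o(k)$ sites in total, then $\Omega(k)$ sites retain their original thresholds and the greedy adversary continues triggering them one by one. In either branch, the total number of messages exchanged within the window is $\Omega(k)$, which is what the lemma claims.
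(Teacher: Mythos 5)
Your opening two steps match the paper's argument exactly: the constraint $\sum_j(n_j-1) < \beta m_i$ (otherwise the adversary stalls and the coordinator's frozen output must be wrong at one end of the reporting window), and then an adversary that repeatedly targets a site with a small threshold. But your way of handling coordinator adaptivity diverges from the paper's and leaves a genuine gap.

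The paper's adversary never ``sorts once and commits.'' It repeats the following: observe the \emph{current} thresholds, note that the invariant $\sum_j(n_j-1)<\beta m_i$ must still hold (otherwise the adversary could now stall to the end of the window), pick any $j$ with $n_j\le 2\beta m_i/k$, and dump a full batch of $2\beta m_i/k$ copies on $S_j$. Each batch triggers at least one message no matter what the coordinator does in response, and the budget $\beta m_i$ supports exactly $k/2$ batches, giving $\Omega(k)$ messages. Re-deriving the invariant after each batch is what makes adaptivity a non-issue.

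Your proposal instead sorts the \emph{initial} thresholds, runs a greedy sweep assuming they stay frozen, and then patches with a two-case accounting: either the coordinator resets $\Omega(k)$ sites (done) or it resets $o(k)$ sites and ``$\Omega(k)$ sites retain their original thresholds.'' The gap is in the second branch. Retaining the original threshold is not enough, for two reasons. First, after the adversary has spent $\tau$ copies, the live constraint is $\sum_j(n_j-1) < \beta m_i - \tau$, not the time-zero one; the fact that the \emph{untouched} sites' original thresholds once fit inside a $\beta m_i$ budget does not by itself show that the greedy sweep over them fits inside the shrunken remaining budget, especially since the coordinator may reset the \emph{touched} sites to very large values that eat into the slack available for untouched sites. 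Second, the coordinator can selectively reset exactly the sites with the smallest $n_j$ — the ones your sorted greedy order would visit first — so ``the greedy adversary continues triggering them one by one'' in the original order is no longer well defined; the adversary must re-sort, and then you are back to re-deriving the invariant at each step, i.e., the paper's argument. Your averaging claim that $\ell=\Omega(k)$ is fine (it needs $\beta m_i\gg k$, which holds for large $n$), but the patch for adaptivity needs to be replaced by the re-examination argument to close the proof.
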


\begin{proof}
  We will construct an adversary who will send the $\beta m_i$ copies of
  $t$ to the sites in a way such that at least $\Omega(k)$ sites must
  communicate with the coordinator.  Since we are dealing with
  deterministic algorithms, we may assume that the adversary knows the
  triggering thresholds $n_j$ at any time.

Initially, we must have
\begin{equation}
\label{eq:lb}\sum_{j=1}^{k} (n_j-1) <
  \beta m_i.
\end{equation}
Otherwise, the adversary can send $n_j-1$ copies to $S_j$ for all $j$
without triggering any communication, and make the algorithm miss the
change.  Therefore there must be some $j$ such that $n_j \le \beta m_i / k
+ 1 \le 2 \beta m_i/k$.  The adversary first sends $2\beta m_i/k$ copies of
$t$ to $S_j$.  $S_j$ will then communicate with the coordinator at least
once.  After the first $2\beta m_i/k$ copies, the new triggering thresholds
must still satisfy (\ref{eq:lb}).  Similarly, there is some $n_{j'} \le
2\beta m_i /k$, and the adversary will send another $2\beta m_i /k$ copies
of $t$ to $S_{j'}$.  Such a process can be repeated for $\frac{\beta
  m_i}{2\beta m_i/k} = \Omega(k)$ times, triggering at least $\Omega(k)$
messages of communication.
\end{proof}

The following lower bound follows immediately from
Lemma~\ref{lem:numberOfOutputUpdates} and
Lemma~\ref{lem:costEachUpdate}, for the reason that the tracking
algorithm has to correctly and continuously maintain the whole set
of heavy hitters.

\begin{theorem}
\label{thm:lowerbound-hh}
Any deterministic algorithm that continuously tracks the $\phi$-heavy
hitters has to incur a total communication cost of $\Omega(k/\eps\cdot \log
n)$, for any $\phi>3\eps$.
\end{theorem}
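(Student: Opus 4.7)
The plan is to simply splice the two preceding lemmas together: the construction of Lemma~\ref{lem:numberOfOutputUpdates} furnishes an input stream of length $n$ that forces $\Omega(\log n / \eps)$ distinct transitions (a non-heavy hitter becoming a heavy hitter), and Lemma~\ref{lem:costEachUpdate} says that each such transition, when it is fed to the distributed system by a worst-case adversary, costs the protocol $\Omega(k)$ messages. Multiplying gives the claimed $\Omega(k/\eps \cdot \log n)$ bound.

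To make this rigorous I would first replay the construction of Lemma~\ref{lem:numberOfOutputUpdates}, but now distribute the arrivals across the $k$ sites rather than treating the system as one centralized stream. Inside each round $i$, the $\beta m_i$ copies of the current promoted item $t_r \in \mathcal{S}_{1-b}$ are injected one item at a time; the adversary from Lemma~\ref{lem:costEachUpdate} is invoked on exactly this block and chooses, at each step, a site whose current triggering threshold is at most $2\beta m_i / k$, routing the next $2\beta m_i/k$ copies there. Since the coordinator must classify $t_r$ as a $\phi$-heavy hitter before its frequency exceeds $\phi |A|$ (and while it is still above $(\phi-\eps)|A|$, the interval in which the $\beta m_i$ copies land), this adversarial routing is forced to incur $\Omega(k)$ messages per promoted item.

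The key conceptual step is to argue that the communication cost accumulates additively over the $\Omega(\log n / \eps)$ transitions. This is true because the adversary argument in Lemma~\ref{lem:costEachUpdate} depends only on the triggering thresholds at the beginning of the current block of $\beta m_i$ copies of $t_r$; whatever state the sites and coordinator are in at that moment, inequality~(\ref{eq:lb}) either holds---in which case the adversary extracts $\Omega(k)$ fresh messages from that block alone---or it fails, in which case the algorithm is already incorrect on this block and we are done. Thus the $\Omega(k)$ cost is paid anew for each of the $l \cdot \Theta(\log_{\phi/(\phi-\eps')} n)$ promotions counted in Lemma~\ref{lem:numberOfOutputUpdates}, summing to $\Omega(k/\eps \cdot \log n)$.

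The only step that needs any real care is the bookkeeping between rounds: between consecutive promotions the sites may freely communicate and reset their thresholds, so one must make sure that the ``$\Omega(k)$ messages per change'' accounting does not double-count messages that also serve to set up the next round. This is handled by charging messages to the block of $\beta m_i$ copies during which they are sent, so that distinct transitions are charged to disjoint time intervals; the total is therefore a genuine sum and the theorem follows.
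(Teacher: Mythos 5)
Your proposal is correct and takes essentially the same route as the paper: the theorem is obtained by combining Lemma~\ref{lem:numberOfOutputUpdates} and Lemma~\ref{lem:costEachUpdate} multiplicatively. The paper's own proof of the theorem is a single sentence asserting this combination, whereas you spell out the additivity argument — that the $\Omega(k)$ messages extracted by the adversary in each block are charged to temporally disjoint intervals, so the costs genuinely sum — which is the right way to make the terse argument rigorous.
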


\paragraph{Remark.}
Note that our lower bound above is actually lower bound on the number of
messages required.  Also recall that our algorithm in
Section~\ref{sec:upper-bound} sends $O(k/\eps\cdot\log n)$ messages and
each message if of constant size.  Our lower bound implies that one cannot
hope to reduce the number of messages by making each of them longer.




\section{Tracking the Median}
\label{sec:track-median}

In this section we first present an algorithm to track any $\phi$-quantile
for $0 \le \phi \le 1$.  For ease of presentation we describe how to track
the median (the $1/2$-quantile); the generalization to any $\phi$-quantile
is straightforward. Then we give a matching lower bound.

\subsection{The upper bound}
For simplicity we assume that all the items in $A$ are distinct; issues
with ties can be easily resolved by standard techniques such as symbolic
perturbation.  We divide the whole tracking period into $O(\log n)$ rounds;
whenever $|A|$ doubles, we start a new round.  In the following we focus on
one round, and show that our median-tracking algorithm has a communication
cost of $O(k/\eps)$.

Let $m$ be the cardinality of $A$ at the beginning of a round.  Note that
$m$ is fixed throughout a round and we always have $m \le |A|$. 
The main
idea of our algorithm is to maintain a 
dynamic set of disjoint intervals in the coordinator (by
maintaining a set of separating items), such that each interval
contains between $\frac{\eps}{8}m$ and $\frac{\eps}{2}m$ items.
We first show that if we have such a set of intervals, the median
can be tracked efficiently.  Afterward we discuss how to maintain
these intervals.

Let $M$ denote the approximate median that is kept at the coordinator.  We
maintain two counters $C.\Delta(L)$ and $C.\Delta(R)$, counting the number
of items that have been received at all sites to the left and the right of
$M$, respectively. These two counters are maintained as underestimates with
an absolute error at most $\frac{\eps}{8} m$, by asking each site to send
in an update whenever it has received $\frac{\eps}{8k}m$ items to the left
or right of $M$.  So the cost of maintaining them is $O(k/\eps)$.

Whenever $|C.\Delta(L) - C.\Delta(R)| \ge \frac{\eps}{2} m$, we
update $M$ as follows.

\begin{enumerate}
\item  Compute $C.L$ and $C.R$ as the total number of items to the left and
  the right of $M$.  W.l.o.g., suppose $C.L > C.R$ and let $d = (C.L -
  C.R)/2$.

\item Compute a new median $M'$ such that $|r(M) - r(M') - d| \le
  \frac{\eps}{4} m$ where $r(M)$ is the rank of $M$ in $A$.  Update $M$ to
  $M'$.  Note that $M'$ is at most $\frac{\eps}{4}m$ items away from the
  exact median.  We will describe how to compute such an $M'$ shortly.

\item Reset $C.\Delta(L)$ and $C.\Delta(R)$ to 0.
\end{enumerate}

For the correctness of the algorithm, we can show that our
tracking algorithm always maintains an approximate median that is
at most $\frac{\eps}{4}m + \frac{3\eps}{4}m = \eps m$ items away
from the exact median.  The first term $\frac{\eps}{4}m$ is due to
the fact that whenever we update $M$, $M$ is within an error of
at most $\frac{\eps}{4}m$ to the exact median.  The second term
$\frac{3\eps}{4}m$ accounts for the error introduced by the
triggering condition $|C.\Delta(L) - C.\Delta(R)|$ monitored in
the coordinator. Note that we keep both $C.\Delta(L)$ and
$C.\Delta(R)$ within an additive error of at most $\frac{\eps}{8}
m$ and whenever $|C.\Delta(L) - C.\Delta(R)| \ge \frac{\eps}{2}
m$, we initiate an update.  Therefore, the total error
introduced is at most $2 \cdot \frac{\eps}{8}m + \frac{\eps}{2}
m =  \frac{3\eps}{4}m$.

Now we analyze the communication cost. Step 1 could be done by
exchanging $O(k)$ messages. For step 2, first note that $ d \le
\eps m$ since by the reasoning above, $M$ is still an
$\eps$-approximate median.  Next, we can find $M'$ quickly with
the help of the set of intervals. We start by finding the first
separating item $Y_1$ of the intervals to the left of $M$, and
then collect information from all sites to compute the number of
items in the interval $[Y_1, M]$, say $n_1$.  If $|n_1 - d| \le
\frac{\eps}{2} m$, we are done; otherwise we go on to pick the
second separating item $Y_2$ to the left of $M$, and check if
$|n_2 - d|\le \frac{\eps}{2}m$, where $n_2$ is the number of items
in the interval $[Y_2, M]$. It is easy to see that after at most
$O(1)$ such probes, we can find an item $Y$ such that the rank
difference between $Y$ and the exact median is no more than
$\frac{\eps}{2}m$. Note that the cost of each probe is $O(k)$ thus the
total cost of step 2 is $O(k)$. Finally, we update $M$ at most $O(1/\eps)$ times within
a single round, since each update increases $|A|$ by at least a
factor of $1 + \frac{\eps}{2}$. To sum up, the total cost of
the algorithm within a round is $O(k/\eps)$ provided that the
dynamic set of intervals are maintained.

\paragraph{Maintaining the set of intervals.}
When a new round starts, we initialize
the set of intervals as follows: Each site $S_j\ (1 \le j \le k)$
computes a set of intervals, each containing $\frac{\eps |A_j|}{32}$ items,
where $A_j$ stands for the set of items $S_j$ has received, and then sends
the set of intervals to the coordinator (by sending those separating
items).  Then the coordinator can compute the rank of any $x\in U$ with an
error of at most $\sum_{j=1}^k \frac{\eps}{32} |A_j| = \frac{\eps}{32} m$,
therefore it can compute a set of intervals, each of which contains at
least $\frac{\eps}{8}m$ and at most $\frac{\eps}{4} m$ items. After the
coordinator has built the set of intervals, it broadcasts them to all the
$k$ sites, and then computes the exact number of items in each interval.
  The cost of each rebuilding is $O({k}/{\eps})$.

During each round, each site $S_j$ maintains a counter for each interval as
new items arrive. And whenever the local counter of items in some interval
$I$ has increased by $\frac{\eps}{4k}m$, it sends a message to
the coordinator and the coordinator updates the count for interval $I$
accordingly.  Whenever the count of some interval in the coordinator $C$
reaches $\frac{\eps}{4} m$, the coordinator splits the interval into two
intervals, each of which containing at least $\frac{\eps}{8}m$ and at most
$\frac{\eps}{4}m$ items.  To perform such a split, we can again call the
rebuilding algorithm above, except that the rebuilding is only applied to
the interval $I$, so the cost is only $O(k)$.

The correctness of algorithm is obvious. The total communication
cost of interval splits is $O({k}/{\eps})$ in each round, since
there are at most $O(1/\eps)$ splits and each split incurs a
communication cost $O(k)$.

\begin{theorem}
  There is a deterministic algorithm that continuously tracks the
  $\eps$-approximate median (and generally, any $\phi$-quantile $(0 \le
  \phi \le 1)$) and incurs a total communication cost of $O(k/\eps\cdot
  \log n)$.
\end{theorem}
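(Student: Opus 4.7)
The plan is to establish the theorem by decomposing the tracking period into $O(\log n)$ rounds (each round ending when $|A|$ doubles), and then proving that the total communication within a single round is $O(k/\eps)$; multiplying the two quantities immediately yields the $O(k/\eps\cdot\log n)$ bound. Fixing a round, let $m$ be the value of $|A|$ at its start, so that $m \le |A| \le 2m$ throughout. I would account for the communication in four disjoint buckets and argue each is $O(k/\eps)$: (i) maintaining the global left/right counters $C.\Delta(L)$ and $C.\Delta(R)$, (ii) rebuilding the interval set at the start of the round, (iii) updating the approximate median $M$, and (iv) splitting intervals as local counts grow.

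For (i), each site reports whenever its local contribution to $C.\Delta(L)$ or $C.\Delta(R)$ grows by $\eps m/(8k)$; since $|A|$ grows by at most a factor of $2$ in the round, each site sends $O(1/\eps)$ updates, totaling $O(k/\eps)$ messages. For (ii), the initial broadcast sends $O(1/\eps)$ separating items per site and then collects exact counts, again $O(k/\eps)$. For (iii), the key observation is that a median update is triggered only when $|C.\Delta(L)-C.\Delta(R)|\ge \eps m/2$, which forces $|A|$ to grow by at least a $(1+\eps/2)$ factor between consecutive updates; since $|A|$ at most doubles in a round there are at most $O(1/\eps)$ updates, each costing $O(k)$ using the $O(1)$-probe search within the interval structure described above the theorem. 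For (iv), each split requires $\eps m/4$ new items to accumulate in some interval, and since at most $\le 2m$ items arrive in a round, at most $O(1/\eps)$ splits occur, each at cost $O(k)$.

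The correctness bookkeeping carries over verbatim from the preceding discussion: the maintained $M$ is always within rank error $\eps m/4 + 3\eps m/4 = \eps m \le \eps|A|$ of the exact median, with $\eps m/4$ coming from the rebuild tolerance of step 2 and $3\eps m/4$ from the counter slack plus the triggering threshold. The doubling definition of a round guarantees that this error remains an $\eps$-approximation at all times, even as $|A|$ grows within the round. The generalization to an arbitrary $\phi$-quantile requires only replacing the triggering condition $|C.\Delta(L)-C.\Delta(R)|\ge \eps m/2$ by a condition that checks whether the current $\phi$-rank of $M$ has drifted by more than $\eps m/2$, i.e., $|C.\Delta(L)-\phi(C.\Delta(L)+C.\Delta(R))|$ exceeds that threshold; the same analysis gives $O(k/\eps)$ per round.

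The place that requires a little care, and the main obstacle I anticipate, is verifying that the interval invariant (each interval holds between $\eps m/8$ and $\eps m/2$ items) is preserved across splits even when the coordinator's counts are stale by up to $\eps m/(4k)$ per site. I would handle this by noting that the trigger ``coordinator count reaches $\eps m/4$'' combined with the site slack keeps the true size of a to-be-split interval in $[\eps m/4,\eps m/2]$, so the rebuild (which produces sub-intervals of size in $[\eps m/8,\eps m/4]$ using the same construction as the initial rebuild) restores the invariant; crucially, no split can itself cause another immediate split, so the $O(1/\eps)$ bound on the number of splits per round is honest rather than amortized.
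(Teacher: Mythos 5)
Your proposal mirrors the paper's proof closely: $O(\log n)$ doubling rounds, median updates triggered by $|C.\Delta(L)-C.\Delta(R)|\ge \eps m/2$ with $O(1)$-probe relocation of $M$ via the $\Theta(1/\eps)$-size interval set, and a per-round budget of $O(k/\eps)$. However, your four-bucket accounting silently drops one cost component that the paper charges explicitly: the messages that each site sends the coordinator to keep the coordinator's \emph{per-interval counts} fresh, i.e.\ a message whenever a site's local count inside some interval grows by $\eps m/(4k)$. These messages are what enable the coordinator to detect that an interval has accumulated $\eps m/4$ items and must be split; your bucket (iv) only counts the $O(k)$ cost of each rebuild, not the stream of count updates that triggers it. The bound is the same, $O(k/\eps)$ per round (charge each such message to the $\eps m/(4k)$ fresh items that caused it; each item lands in exactly one interval and the round sees $\Theta(m)$ items), so the theorem survives, but it needs to be a fifth line item. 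Separately, your bound on the number of splits via ``each split needs $\eps m/4$ fresh items'' leans on the ``no immediate re-split'' claim, which is delicate with the constants as stated (a sub-interval can be handed off with exactly $\eps m/4$ items); a cleaner and fully robust argument is structural: at all times the intervals partition $U$ and each contains at least $\eps m/8$ items, so there are at most $O(1/\eps)$ of them, and since each split increases this count by one, there are at most $O(1/\eps)$ splits per round regardless of whether splits cascade.
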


\paragraph{Implementing with small space.}
Similar to our heavy hitter tracking algorithm, instead of maintaining the
intervals exactly at each site, we can again deploy a sketch that maintains
the approximate $\eps'$-quantiles for some $\eps'=\Theta(\eps)$ to maintain
these intervals approximately.  Suppose we use the Greenwald-Khanna sketch
\cite{greenwald01:_space}, then we can implement our $\phi$-quantile
tracking algorithm with $O(1/\eps\cdot \log(\eps n))$ space per site and
amortized $O(\log n)$ time per item.

\subsection{The lower bound}

The idea of the proof of the lower bound is similar as that for
the heavy hitters. We try to construct a sequence of input with
the following properties.
\begin{enumerate}
\item The median will change at least $\Omega(\log n/\eps)$ times.

\item To correctly recognize each update, any deterministic
algorithm has to incur a communication cost of $\Omega(k)$.
\end{enumerate}

Consider the following construction. The universe consists of only
two items $0$ and $1$. We divide the whole tracking period to
several rounds and let $m_i$ be the number of items at the
beginning of round $i$. We maintain the following invariant: When
round $i$ starts, the frequency of item $b$ is $(0.5 -
2\eps)m_i$ and the frequency of item $1-b$ is $(0.5 +
2\eps)m_i$, where $b = i \bmod 2$. This could be done by inserting
$\frac{4\eps}{0.5 - 2\eps} m_i$ copies of $b$ during round $i$
and then start a new round. It is easy to see that there will be
at least $\Omega(\log n/\eps)$ rounds and the median will change
at least once during each round, therefore the total number of
changes of the median is $\Omega(\log n/\eps)$. For the second
property, we can invoke the same arguments as that for
Lemma~\ref{lem:costEachUpdate}.
Combining the two properties, we have the following.

\begin{theorem}
\label{thm:lowerbound-median}
  Any deterministic algorithm that continuously tracks the approximate
  median has to incur a total communication cost of
  $\Omega(k/\eps\cdot \log n)$.
\end{theorem}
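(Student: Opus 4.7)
The plan is to follow the same two-step recipe established for heavy hitters in Lemmas~\ref{lem:numberOfOutputUpdates} and~\ref{lem:costEachUpdate}: first exhibit an adversarial input under which the approximate median must change $\Omega(\log n/\eps)$ times, and then argue that detecting each such change requires $\Omega(k)$ communication.

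For the first step, I would restrict the universe to $\{0,1\}$ and drive the majority back and forth. Partition the tracking period into rounds, let $m_i$ be the number of items at the start of round $i$, and enforce the invariant that at the start of round $i$ item $b = i \bmod 2$ has frequency $(1/2 - 2\eps)m_i$ while item $1-b$ has frequency $(1/2 + 2\eps)m_i$. To flip the majority in round $i$, insert $c \cdot m_i$ copies of $b$ where $c = 4\eps/(1/2 - 2\eps) = \Theta(\eps)$; a direct check shows that this restores the invariant with the roles of $b$ and $1-b$ swapped and with $m_{i+1} = (1+c)m_i$. Since the blow-up per round is $1 + \Theta(\eps)$, there are $\Theta(\log n / \eps)$ rounds, and in each round the true median flips, which (by the $\eps$-approximation guarantee) forces the reported median to change at least once.

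For the second step, I would reuse essentially verbatim the adversary argument of Lemma~\ref{lem:costEachUpdate}. During the $c m_i$ insertions of item $b$ in round $i$, the coordinator must eventually announce a change; otherwise the reported median falls outside the $\eps$-tolerance by the end of the round. Under the triggering-threshold model fixed before Lemma~\ref{lem:costEachUpdate}, let $n_j$ be the number of further copies of $b$ site $S_j$ can absorb silently. An analogue of inequality~(\ref{eq:lb}) (with $\beta m_i$ replaced by $c m_i$) must hold at all times, because otherwise the adversary could fill each site to just below its threshold and force the algorithm to miss the change. Hence at every point there exists some $j$ with $n_j \le 2 c m_i / k$; the adversary routes $2 c m_i / k$ copies of $b$ to this site, triggering at least one message, and then iterates. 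This can be repeated $c m_i / (2 c m_i / k) = \Omega(k)$ times, so $\Omega(k)$ messages are required per change.

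Multiplying the two bounds yields the claimed $\Omega(k/\eps \cdot \log n)$ total. The main obstacle I anticipate is making the reduction in step two fully rigorous: the original lemma was stated for a single distinguished item whose frequency was crossing the heavy-hitter boundary, so I need to verify that (a) within the window of $c m_i$ insertions the reported median truly must be updated rather than merely might be, (b) the only local quantity relevant to site communication is the count of $b$, so that the adversary can in fact restrict its actions to routing copies of $b$, and (c) the ``triggering threshold'' abstraction still captures all deterministic protocols in this simplified two-symbol universe.
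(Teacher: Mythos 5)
Your proposal matches the paper's proof essentially verbatim: both restrict to the two-symbol universe $\{0,1\}$, maintain the $(1/2 \pm 2\eps)m_i$ frequency invariant, flip the majority each round by inserting $\frac{4\eps}{1/2-2\eps}m_i$ copies of the minority item, conclude that there are $\Theta(\log n/\eps)$ rounds, and then invoke the adversary argument of Lemma~\ref{lem:costEachUpdate} to charge $\Omega(k)$ messages per flip. The rigor concerns you raise at the end (that the reported median \emph{must} change within the insertion window, that the adversary may restrict attention to the count of $b$, and that the triggering-threshold abstraction captures all deterministic protocols) are left equally informal in the paper, which simply states that one can ``invoke the same arguments'' as in the heavy-hitter lower bound.
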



\newcommand{\T}{\mathcal{T}}
\newcommand{\hmax}{h}

\section{Tracking All Quantiles}
\label{sec:tracking-quantiles}

In this section, we give a tracking algorithm so that the
coordinator $C$ always tracks the $\eps$-approximate
$\phi$-quantiles for all $0\le \phi \le 1$ simultaneously.  We will solve
the
following equivalent problem: The coordinator is required to
maintain a data structure from which we can extract the rank
$r(x)$ for any $x\in U$ in $A$ with an additive error at most
$\eps |A|$. We still assume that all items in $A$ are distinct.

We divide the whole tracking period into $O(\log n)$ rounds.  In each round
$|A|$ roughly doubles. We will show that the algorithm's cost in each round
is $O(k/\eps \cdot \log^2\frac{1}{\eps})$.  The algorithm restarts itself
at the beginning of each round, therefore the total communication of the
algorithm will be $O(k/\eps \cdot \log n \log^2\frac{1}{\eps})$.

\paragraph{The data structure.}
Let $m$ be the cardinality of $A$ at the beginning of a round. The data
structure is a binary tree $\T$ with $\Theta(1/\eps)$ leaves.  The root $r$
of $\T$ corresponds to the entire $A$.  It stores a splitting element $x_r$
which is an approximate median of $A$, i.e., it divides $A$ into two parts,
either of which contains at least $(\frac{1}{2}- \alpha)|A|$ and at most
$(\frac{1}{2}+\alpha)|A|$ items, for some constant $0<\alpha<\frac{1}{2}$.
Then we recursively build $r$'s left and right subtrees on these two parts
respectively, until there are no more than $\eps m /2$ items left.  It is
clear that $\T$ has $\Theta(1/\eps)$ nodes in total, and has height at most
$\hmax = \log_{\frac{1}{2}+\alpha} \frac{\eps}{2} =
\Theta(\log\frac{1}{\eps})$, though it is not necessarily balanced.  Each
node in $\T$ is naturally associated with an interval.  Let $I_u$ be the
interval associated with $u$.  Then $I_r$ is the entire $U$; suppose $v$
and $w$ are $u$'s children, then $I_u$ is divided into $I_v$ and $I_w$ by
$x_u$.  Set $\theta = \frac{\eps}{2\hmax}$.  Each node $u$ of $\T$ is in
addition associated with $s_u$, which is an underestimate of $|A\cap I_u|$
with an absolute error of at most $\theta m$, i.e., $|A\cap I_u| - \theta m
\le s_u \le |A\cap I_u|$. Please see Figure~\ref{fig:tracking_quan} for an
illustration of the data structure.

\begin{figure}
\begin{center}
\includegraphics[width=12cm]{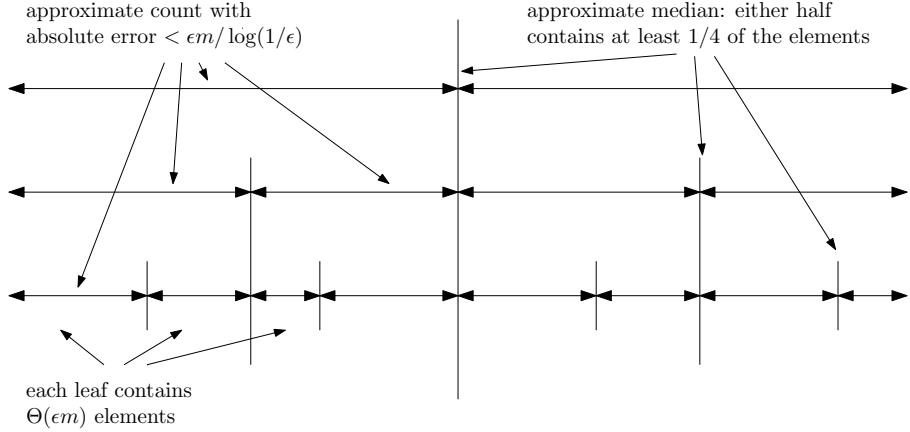}
\caption{The data structure that can be used to extract the rank of any
$x$ with absolute error $<\epsilon m$.} \label{fig:tracking_quan}
\end{center}
\end{figure}

If the coordinator has such a data structure, it is not difficult to see
that we can compute the rank of $x$ with an absolute error of at most $\eps
m$.  For a given $x$, we first search down the binary tree and locate the
leaf $v$ such that $x \in I_v$.  As we go along the root-to-leaf path,
whenever we follow a right child, we add up the $s_u$ of its left sibling.
In the end we add up $\hmax$ such partial sums, each contributing an error
of at most $\theta m$, totaling $\theta m \cdot \hmax = \eps m /2$.
Finally, since $|A \cap I_v| < \eps m /2$, the sum of all the $s_u$'s for
the preceding intervals of $x$ is off by at most $\eps m$ from the actual
rank of $x$.

\paragraph{Initialization.}
At the beginning of each round, we initialize the data structure
similarly as in Section~\ref{sec:track-median}. Suppose the set of
items at $S_j$ is $A_j$. Each site $S_j$ builds its own structure
$S_j.\T$, but with $\eps/32$ as the error parameter, and ships to
$C$.  This costs a communication of $O(k/\eps)$. Note that
$S_j.\T$ allows one to extract the rank of any $x$ within $A_j$
with an error of $\eps/32\cdot |A_j|$. By querying each $S_j.\T$, the
coordinator can compute the rank of any $x$ with an error of
$\sum_{i=1}^k \frac{\eps}{32} |A_i| = \frac{\eps}{32} m$, which is
enough for the coordinator to build its own $C.\T$.  In
particular, all the splitting elements can be chosen to be within
a distance of $\frac{\eps}{32} m$ to the real median. After
building $C.\T$, the coordinator broadcasts it to all the sites,
costing communication $O(k/\eps)$.  Now each site $S_j$ knows how
$U$ is subdivided into those $\Theta(1/\eps)$ intervals represented by the
binary tree $\T$. Then for
each interval $I_u$, it computes $|A_j \cap I_u|$ and sends the count to
$C$, so that the coordinator has all the exact partial sums $s_u$
to start with. It is easy to see that the total communication cost
for initializing the data structure is $O(k/\eps)$.

\paragraph{Maintaining the partial sums.}
As items arrive, each site $S_j$ monitors all the intervals $I_u$
in $\T$. For each $I_u$, every time the local count of items in
$I_u$ at $S_j$ has increased by $\theta m / k$, it sends an
updated local count to $C$.  Thus in the worst case, each site is
holding $(\theta m /k -1)$ items that have not been reported,
leading to a total error of at most $\theta m$.  The cost of these
messages can be bounded as follows.  When $S_j$ sends a new count
for some interval $I_u$, we charge the cost to the $\theta m /k$
new items that have arrived since the last message for $I_u$, $O(k
/ (\theta m))$ each.  Since each item contributes to the counts of
at most $\hmax$ intervals, it is charged $O(\hmax)$ times, so the
total cost charged to one item is $O(\frac{k\hmax}{\theta m})$.
There are a total of $O(m)$ items in a single round, so the
overall cost is $O(k\hmax/\theta) = O(k/\eps \cdot
\log^2\frac{1}{\eps})$.

\paragraph{Maintaining the splitting elements.}
The maintenance algorithm above ensures that all the $s_u$ are within the
desired error bound.  We still need to take care of all the splitting
elements, making sure that they do not deviate from the real medians too
much.  Specifically, when we build $\T$, for any $u$ with children $v$ and
$w$, we ensure that
\begin{equation}
\label{eq:2}
\frac{3}{8}|A \cap I_u| \le |A \cap I_v| \le
\frac{5}{8} |A \cap I_u|.
\end{equation}
This property can be easily established during initialization, since $|A
\cap I_u| > \frac{\eps}{2}m$ for any internal node $u$ of $\T$, and we can
estimate $|A\cap I_v|$ with an error of $\frac{\eps}{32}m$.  In the middle
of the round, we maintain the following condition:
\begin{equation}
\label{eq:3}
\frac{1}{4} s_u \le s_v \le \frac{3}{4} s_u.
\end{equation}
Recall that $s_u$ (resp.\ $s_v$) is an estimate of $|A \cap I_u|$ (resp.\
$|A \cap I_v|$) with an error of at most $\theta m$.  As long as
(\ref{eq:3}) holds, we have
\[
\frac{1}{4} (|A \cap I_u| - \theta m) \le \frac{1}{4} s_u \le s_v \le |A
\cap I_v| + \theta m.
\]
Rearranging,
\[
|A \cap I_v| \ge \frac{1}{4}|A\cap I_u| - \frac{5}{4}\cdot \frac{\eps}{2h}
m \ge \frac{1}{4}|A\cap I_u| - \frac{5}{4}\cdot \frac{1}{h}|A\cap I_u| \ge
\frac{3}{32} |A\cap I_u|,
\]
for $h\ge 8$. (Note that assuming $h$ larger than any constant does not
affect our asymptotic results.)  Similarly, we also have $|A\cap I_v | \le
\frac{29}{32} |A \cap I_u|$.  Thus condition~(\ref{eq:3}) ensures that the
height of $\T$ is bounded by $h = \Theta(\log\frac{1}{\eps})$.

Whenever (\ref{eq:3}) is violated, we do a partial rebuilding of the
subtree rooted at $u$ to restore this condition.  If multiple conditions
are violated at the same time, we rebuild at the highest such node.  To
rebuild the subtree rooted at $u$, we apply our initialization algorithm,
but only for the range $I_u$.  This incurs a cost of $O(k \frac{|A\cap
  I_u|}{\eps m})$, since we are essentially building a new data structure
on $|A \cap I_u|$ elements with error parameter $\eps' = \eps m / |A \cap
I_u|$.  After rebuilding, we have restored (\ref{eq:2}) for $u$ and all its
descendants.

It remains to bound the cost of the partial rebuildings.  Similarly as
before, we can show that when (\ref{eq:3}) is violated, we must have
\begin{equation}
  \label{eq:4}
|A \cap I_v| < \frac{21}{64}|A\cap I_u|,
\end{equation}
or
\begin{equation}
\label{eq:5}
   |A \cap I_v| > \frac{43}{64}|A\cap I_u|,
\end{equation}
assuming $h \ge 16$.  Note that both $|A\cap I_v|$ and $|A\cap I_u|$ may
increase.  From (\ref{eq:2}) to (\ref{eq:4}), $|A \cap I_u|$ must increase
by $\Omega(|A\cap I_v|) = \Omega(|A\cap I_u|)$; from (\ref{eq:2}) to
(\ref{eq:5}), $|A\cap I_v|$ must increase by $\Omega(|A \cap I_u|)$, which
implies that $|A\cap I_u|$ must also increase by $\Omega(|A \cap I_u|)$
since $I_v \subset I_u$.  This means that between two partial rebuildings
of $u$, $|A \cap I_u|$ must have increased by a constant factor.  Thus, we
can charge the rebuilding cost of $u$ to the $\Omega(|A \cap I_u|)$ new
items that have arrived since the last rebuilding,  $O(k/(\eps m))$ each.
Since each item is contained in the intervals of $O(h)$ nodes, it is
charged a cost of $O(hk/(\eps m))$ in total.  Therefore, the total cost of
all the partial rebuildings in this round is $O(hk/\eps) = O(k/\eps \cdot
\log\frac{1}{\eps})$.

\paragraph{Maintaining the leaves.}
Finally, we need to make sure that $|A\cap I_v| \le \frac{\eps}{2} m$ for
each leaf $v$ as required by the data structure.  During initialization, we
can easily ensure that $\frac{1}{8} \eps m \le |A\cap I_v| \le
\frac{3}{8}\eps m$.  During the round, the coordinator monitors $s_v$, and
will split $v$ by adding two new leaves below $v$ whenever $s_v >
(\frac{\eps}{2}-\theta)m$.  Since $s_v$ has error at most $\theta m$, this
splitting condition will ensure that $|A\cap I_v| \le \frac{\eps}{2} m$.
To split $v$, we again call our initialization algorithm on the interval
$I_v$, incurring a cost of $O(k\frac{|A\cap I_v|}{\eps m}) = O(k)$.  Since
we create at most $O(1/\eps)$ leaves in this entire round, the total cost
for all the splittings is $O(k/\eps)$.

\smallskip Putting everything together, we obtain the following result.

\begin{theorem}
  There is a deterministic algorithm that continuously tracks the
  $\phi$-quantiles for all $0\le \phi \le 1$ simultaneously and incurs a
  total communication cost of
  $O(k/\eps\cdot \log n \log^2\frac{1}{\eps})$.
\end{theorem}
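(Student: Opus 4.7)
My plan is to maintain at the coordinator a binary search tree $\T$ over the universe with $\Theta(1/\eps)$ leaves, where each node $u$ owns an interval $I_u$ and stores an approximate count $s_u$ of items falling in $I_u$. The rank of any query $x$ will be extracted by walking root-to-leaf and summing the estimates at left siblings along the way; if $\T$ has height $h=\Theta(\log\tfrac{1}{\eps})$ and each $s_u$ has absolute error $\theta m$ with $\theta=\Theta(\eps/h)$, the accumulated error is $h\cdot \theta m + \eps m/2 = O(\eps m)$, which is within tolerance. I then divide the tracking period into $O(\log n)$ doubling rounds and aim for a per-round cost of $O(k/\eps\cdot\log^2\tfrac{1}{\eps})$; the tree is rebuilt from scratch at the start of each round.

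The analysis splits into four parts. First, \emph{initialization}: each site ships a local $\eps/\Theta(1)$-quantile summary to the coordinator, who merges them to pick splitting elements and exact initial counts, at cost $O(k/\eps)$. Second, \emph{partial-sum maintenance}: each site pushes an updated count to the coordinator for an interval $I_u$ each time the local count in $I_u$ grows by $\theta m/k$. Amortizing by charging each arriving item the $O(h)$ intervals it falls into at rate $O(k/(\theta m))$ each, the total is $O(hk/\theta)=O(k/\eps\cdot\log^2\tfrac{1}{\eps})$, which dominates the whole round. Third, \emph{leaf splitting}: whenever $s_v$ exceeds $(\eps/2-\theta)m$ we split the leaf via the initialization routine restricted to $I_v$, paying $O(k\cdot|A\cap I_v|/(\eps m))=O(k)$; since $O(1/\eps)$ leaves are ever created, the total is $O(k/\eps)$.

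The part I expect to be the main obstacle is the third component, \emph{maintenance of the splitting elements}, because the coordinator only sees noisy $s_u$'s and we must guarantee that $h$ remains $\Theta(\log\tfrac{1}{\eps})$ throughout the round even as items arrive adversarially. My plan is to enforce a multiplicative invariant of the form $\tfrac14 s_u\le s_v\le \tfrac34 s_u$ on every child $v$ of $u$, then translate this via the $\theta m$ slack into a true-count invariant such as $|A\cap I_v|\in[\tfrac{3}{32},\tfrac{29}{32}]\cdot|A\cap I_u|$ (using $h$ larger than some constant), which keeps the height bounded. Whenever the invariant at some $u$ is violated I trigger a partial rebuild of the subtree rooted at $u$, which costs $O(k\cdot|A\cap I_u|/(\eps m))$ by reusing the initialization idea on $I_u$ with error parameter $\eps m/|A\cap I_u|$. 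To amortize this, I would argue that moving from the initial post-rebuild state to a violation state forces $|A\cap I_u|$ to grow by a constant factor; charging the rebuild cost to the $\Omega(|A\cap I_u|)$ items responsible yields $O(k/(\eps m))$ per item per level, and since each item lies in $O(h)$ nodes, the total over the round is $O(hk/\eps)=O(k/\eps\cdot\log\tfrac{1}{\eps})$.

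Summing the four components gives $O(k/\eps\cdot\log^2\tfrac{1}{\eps})$ per round and $O(k/\eps\cdot\log n\log^2\tfrac{1}{\eps})$ over all $O(\log n)$ rounds, matching the claimed bound. The two delicate points to get right are the constants in the multiplicative invariant (so that the noise-to-signal ratio $\theta m/|A\cap I_u|$ stays harmless at every internal node, which needs $|A\cap I_u|\gg \theta m$, i.e., $h$ sufficiently large), and the amortization that each violation is preceded by a constant-factor growth of $|A\cap I_u|$ — both are resolved by the slack built into the $[1/4,3/4]$ invariant versus the $[3/8,5/8]$ initialization guarantee.
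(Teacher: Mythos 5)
Your proposal matches the paper's proof essentially point for point: the same binary tree with $\Theta(1/\eps)$ leaves and per-node error $\theta = \Theta(\eps/h)$, the same four-part cost decomposition (initialization, partial-sum maintenance, leaf splitting, and splitting-element maintenance), and the same $[\tfrac14,\tfrac34]$-on-$s_u$ invariant with $[\tfrac38,\tfrac58]$ rebuild guarantee, translated through the $\theta m$ slack and amortized against constant-factor growth of $|A \cap I_u|$. This is a correct and complete outline of the paper's argument.
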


\paragraph{Implementing with small space.}
Similar as before, instead of maintaining the counts in the intervals
associated with $\T$ exactly at each site, we can again deploy a sketch
that maintains the approximate $\eps'$-quantiles for some
$\eps'=\Theta(\theta)$ to maintain these intervals approximately.  Suppose
we use the Greenwald-Khanna sketch \cite{greenwald01:_space}, then we can
implement our all-quantile tracking algorithm with $O(1/\theta\cdot
\log(\theta n)) = O(1/\eps \cdot \log\frac{1}{\eps}\log(\eps n))$ space per
site and amortized $O(\log n)$ time per item.


\section{Open Problems}
We have restricted ourselves to deterministic algorithms in the paper.  If
randomization is allowed, simple random sampling can be used to achieve a
cost of $O((k+1/\eps^2)\cdot \polylog(n,k,1/\eps))$ for tracking both the
heavy hitters and the quantiles.  This observation has been well exploited
in maintaining the heavy hitters and quantiles for a single stream when
both insertions and deletions are present (see e.g.\
\cite{gilbert02:_how}).  This breaks the deterministic lower bound for
$\eps = \omega(1/k)$.  It is not known if randomization can still help for
smaller $\eps$.  Deriving lower bounds for randomized algorithms is also an
interesting open problem.  Another possible direction is to design
algorithms to track the heavy hitters and quantiles within a sliding window
in the distributed streaming model.

\bibliographystyle{abbrv}

\end{document}